\newtheorem{thm}{Theorem}
\newtheorem{lem}{Lemma}
\newtheorem{assume}{Assumption}
\newtheorem{cor}{Corollary}
\def\BibTeX{{\rm B\kern-.05em{\sc i\kern-.025em b}\kern-.08em
    T\kern-.1667em\lower.7ex\hbox{E}\kern-.125emX}}
\begin{document}

\title{Goal-Oriented Random Access (GORA)}

\author{Ahsen Topbas, Cagri Ari, Onur Kaya,~\IEEEmembership{Member,~IEEE} and Elif Uysal,~\IEEEmembership{Fellow,~IEEE}
\thanks{A. Topbas, C. Ari, and E. Uysal are with \href{https://cng-eee.metu.edu.tr/}{Communication Networks Research Group (CNG)} at the Department of Electrical and Electronics Engineering, Middle East Technical University (METU), 06800 Ankara, Turkiye (e-mail: \{ahsen.topbas, ari.cagri, uelif\}@metu.edu.tr). \par
A. Topbas and C. Ari are also with Turk Telekom, 06103 Ankara, Turkiye (e-mail: \{ahsen.topbastanyeri, Cagri.ari\}@turktelekom.com.tr). \par
O. Kaya is with the Department of Electrical and Electronics Engineering, Isik University, 34980 Istanbul, Turkiye (e-mail: onur.kaya@isikun.edu.tr).}%
\thanks{This work was supported in part by the TÜBİTAK 1515 Frontier Research and Development Laboratories Support Program for
Turk Telekom neXt Generation Technologies Lab (XGeNTT) under Project
5249902, and in part by the European Union through ERC Advanced Grant 101122990-GO SPACE-ERC-2023-AdG. Views and opinions expressed are, however, those of the authors only and do not necessarily reflect those of the European Union or the European Research Council Executive Agency. Neither the European Union nor the granting authority can be held responsible for them.}
}



\maketitle

\begin{abstract}
We propose \textit{Goal-Oriented Random Access} (GORA), where transmitters jointly optimize what to send and when to access the shared channel to a common access point, considering the ultimate goal of the information transfer at its final destination. This goal is captured by an objective function, which is expressed as a \textit{general} (not necessarily monotonic) function of the Age of Information. Our findings reveal that, under certain conditions, it may be desirable for transmitters to delay channel access intentionally and, when accessing the channel, transmit aged samples to reach a specific goal at the receiver.
\end{abstract}

\begin{IEEEkeywords}
Goal-oriented Communication, Effective Communication, Random Access, AoI, Threshold ALOHA. \end{IEEEkeywords}

\vspace{-0.5em}
\section{Introduction}
{
\IEEEPARstart{S}{calable} radio access is key to enabling intelligent applications that rely on data collected from distributed sensors using wireless modalities such as LPWANs and mMTC. \textit{Goal-oriented communication} has recently emerged as a new approach for providing the necessary scalability. This approach represents a paradigm shift, moving beyond the traditional data transmission problem—which primarily focuses on reliability and throughput—to prioritizing \textit{effective communication} aimed at accomplishing specific tasks at the destination~\cite{uysal2022semantic, uysal2024goal}.

Random access is often the radio access method of choice for massive, low-power, and bursty data transmitted by sensors. A number of recent studies have investigated a move toward effective communication in random access~\cite{atabay2020improving, yavascan2021analysis, cavalagli2024reinforcement, chen2020infocom, cocco2023state, wu2024age} by incorporating timeliness. For instance, \cite{atabay2020improving} and \cite{yavascan2021analysis} focused on improving information freshness using the metric \textit{Age of Information} (AoI), which is defined as the time elapsed since the generation of the most recently received sample~\cite{kaul2012real}. The Threshold ALOHA strategy introduced in \cite{atabay2020improving}, controls age by letting users become active based on an age threshold, adopting the \textit{generate-at-will} strategy~\cite{yates2015lazy, sun2017update}. A different approach to effective communication in a random access channel appears in \cite{cocco2023state} for monitoring two-state Markov sources to minimize the state estimation entropy of the sources.

A recent study~\cite{shisher2024timely} has shown that, perhaps counter-intuitively, the reception of aged samples rather than the freshest possible ones, is more useful to the receiver for certain real-time estimation and control applications. For instance, the leader-follower robot setup in \cite[Fig. 2]{shisher2024timely} illustrates that, due to communication delays between the robots, the freshest sample is not the most useful for predicting the current state of the follower robot. Additionally, \cite{ornee2023milcom} has shown that the cost in a safety-critical monitoring system can be a non-monotonic function of AoI. Therefore, it is essential to develop a novel random access strategy that aligns with the principles of goal-oriented communication by integrating data generation and transmission processes. Such a strategy discards the traditional assumption of exogenous data arrivals, allowing nodes to selectively determine which packets to transmit to achieve the desired goal at the destination. Here is an outline of the contributions of this paper:

\begin{figure}
    \centering
\includegraphics[width=0.46\textwidth]{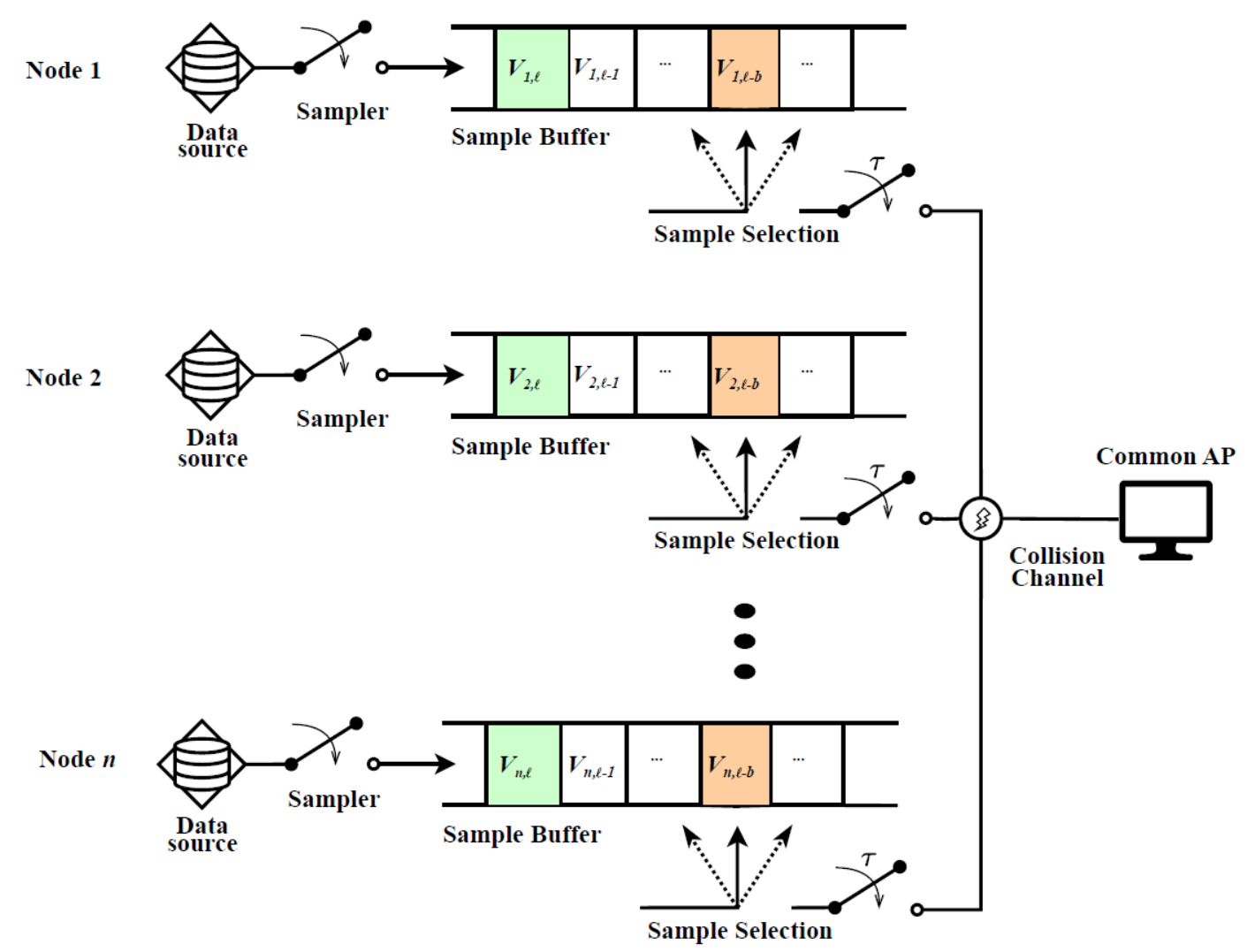}
    \caption{System Model.}\vspace{-0.5cm}
    \label{fig:system_model}
\end{figure}

\begin{itemize}
    \item We introduce Goal-Oriented Random Access (GORA), a novel random access strategy that leverages AoI as an auxiliary metric to ensure the timely transmission of packets that are most useful for the task at the destination.
    \item In GORA, we jointly optimize channel access and sample selection, adopting the \textit{selection-from-buffer} model proposed in \cite{shisher2024timely}.
    \item We perform a steady-state analysis of the proposed scheme, providing structural results on the optimal age threshold for a given penalty function, and the optimal age of the packet to be selected for transmission.
    \item We show that GORA outperforms Threshold ALOHA (TA) for non-monotone age penalty.
\end{itemize}
}

\vspace{-0.5em}
\section{System Model and Problem Formulation}
We consider the system depicted in Fig. \ref{fig:system_model}, where $n$ nodes access a common destination through a collision channel. {Time~is divided into equal-sized slots, each of duration $d$. The $\ell$-th time slot corresponds to the interval $\big[(\ell-1)d, \ell d\big)$.} Nodes are allowed to initiate transmissions only at the beginning of these time slots. When a single node transmits during a time slot, its packet is successfully received by the common destination at the end of the slot. However, if multiple nodes attempt transmission during the same time slot, their packets collide and are lost. Following each successful transmission, the transmitting node receives a 1-bit acknowledgment. \par
Each node~$i$ generates a distinct data flow~$i$. The Age of Information (AoI) $\Delta_i(t)$ of flow $i$ at time $t$ is defined as \cite{kaul2012real}
\begin{equation} \label{cont_AoI}
    \Delta_i(t) = t - u_i(t),
\end{equation} where $u_i(t)$ represents the generation time of the most recently received packet of flow $i$ at the common destination by time~$t$. \par
Each data flow~$i$ is associated with a goal function~$h_i(\delta)$, which characterizes the task of flow~$i$ at the common destination. The function $h_i(\delta)$ represents the penalty incurred by flow~$i$ when $\Delta_i(t) = \delta$ and is not necessarily a monotonic function of the AoI. To address this potential non-monotonicity, we adopt the selection-from-buffer model proposed in~\cite{shisher2024timely}. According to this model, each node~$i$ generates a packet, denoted by $V_{i,\ell}$, from the data source it monitors, $V_i(t)$, at the start of each time slot $\ell$ (i.e., at time instants $t = (\ell-1)d$ for $\ell \in \mathbb{Z}^+$). This packet is then appended to its buffer. Consequently, at any time slot $\ell$, node~$i$ has access to the packets $V_{i,\ell}, V_{i,\ell-1}, V_{i,\ell-2}, \ldots$. \par
We employ the continuous AoI $\Delta_i(t)$, defined in \eqref{cont_AoI}, for constructing the goal function~$h_i(\delta)$. {To leverage the time-slotted structure of our system in the analysis, we now define the discrete AoI~$\Delta_i[\ell]$ of flow~$i$, representing the number of time slots elapsed since the generation of the most recently received packet of flow~$i$ by time slot~$\ell$. The relationship between the continuous and discrete AoI is given by  
\begin{equation} \label{disc_AoI}
    \Delta_i[\ell] = \frac{\Delta_i( (\ell-1) d )}{d}.
\end{equation}} \vspace{-0.3cm} \par
We { adopt} a stationary packet selection and channel access strategy for each node~$i$. At the beginning of each time slot~$\ell$, if active, node~$i$ selects the packet $V_{i,\ell-b_i}$ with discrete AoI~$b_i$ from its buffer. This selected packet is transmitted in time slot $\ell$ with probability $\tau_i$.{Following a successful transmission, node~$i$ backs off for $\Gamma_i$ time slots before becoming active again and resuming transmissions~\cite{yavascan2021analysis}.} 
\par
We now define a staircase function, $H_i(t)$, representing the average penalty experienced by flow $i$ during the time slot containing $t$:
\begin{equation} \label{new_penalty_fnc}
H_i(t) = \frac{1}{d} \int_{\Delta_i[\ell]d }^{(\Delta_i[\ell]+1)d} h_i(\delta) \, d\delta, 
\quad \text{if} \quad (\ell-1)d \leq t < \ell d.
\end{equation} 
The objective is to minimize the long-term ensemble average expected penalty by jointly optimizing the parameters $b_i$, $\tau_i$, and $\Gamma_i$, where $i = 1, 2, \ldots, n$, which specify the packet selection and channel access strategy of all nodes:
\begin{align} \label{problem_formulation}
    \min_{(\bar{b}, \bar{\tau}, \bar{\Gamma})} \lim_{\ell\rightarrow\infty} \frac{1}{n} \sum_{i=1}^{n} \mathbb E_{(\bar{b}, \bar{\tau}, \bar{\Gamma})}[H_i(\ell d)]
\end{align} where $\bar{b} = \begin{bmatrix}b_1 & b_2 & \ldots & b_n\end{bmatrix}$, $\bar{\tau} = \begin{bmatrix}\tau_1 & \tau_2 & \ldots & \tau_n\end{bmatrix}$, and $\bar{\Gamma} = \begin{bmatrix}\Gamma_1 & \Gamma_2 & \ldots & \Gamma_n\end{bmatrix}$.\par 
In this paper, we consider a particular scenario where the goal functions of all flows are identical, i.e., $h_i(\delta) = h(\delta)$ for all $i$. For this scenario, the symmetry of the model guarantees the existence of at least one optimal solution that is symmetric across nodes. Hence, it suffices to focus on the case where $b_i = b$, $\tau_i = \tau$, and $\Gamma_i = \Gamma$ for all $i$. {Under such symmetry, the processes $\{H_i(t)\}$ are statistically identical. Consequently, problem \eqref{problem_formulation} simplifies, and we obtain the following problem~formulation.

\noindent{\textbf{Main Problem:}}}
\begin{align} \label{simp_problem_formulation}
    \min_{(b, \tau, \Gamma)} \lim_{\ell\rightarrow\infty} \mathbb E_{(b, \tau, \Gamma)}[H(\ell d)].
\end{align} where $H(t)$ represents any one of the statistically identical processes $\{H_i(t)\}$. 
\par
Solving \eqref{simp_problem_formulation} requires jointly determining the parameters $b^*$, $\tau^*$, and $\Gamma^*$ for an arbitrary, possibly non-monotonic, goal function $h(\delta)$ that is specific to the task at the destination. 

\vspace{-0.5em}
\section{Steady State Analysis} \label{section:Age}
In this section, we demonstrate that the system exhibits steady-state behavior under our symmetric stationary packet selection and channel access strategy. 
{Under this strategy, the discrete AoI $\Delta_i[\ell]$ of flow $i$ evolves as follows:
\begin{equation}\label{Disc_AoI_Evolution}
    \Delta_i[\ell] = 
    \begin{cases}
        b + 1, & \text{if node $i$ successfully} \\
        & \text{transmits at time slot $\ell-1$}, \\
        \Delta_i[\ell-1]+1, & \text{otherwise}.
    \end{cases}
\end{equation}
} \par
Furthermore, node $i$ is active, transmitting with probability $\tau$ at each time slot $\ell$, if its discrete AoI {$\Delta_i[\ell]$} is greater than or equal to $b+\Gamma$. Therefore, to construct a stochastic model for analyzing the transmission traffic of the system, it is sufficient to use the truncated version $A_i^{b,\Gamma}[\ell]$ of the discrete AoI process {$\Delta_i[\ell]$} corresponding to flow $i$, which is given by
\begin{equation}\label{eq:aoi_gamma}
    A^{b,\Gamma}_i[\ell] = 
    \begin{cases}
        b+1, & \text{if node $i$ successfully} \\
        & \text{transmits at time slot $\ell-1$}, \\
        b+\Gamma, & \text{if $A^{b,\Gamma}_i[\ell-1] = b+\Gamma$}, \\
        A^{b,\Gamma}_i[\ell-1]+1, & \text{otherwise}.
    \end{cases}
\end{equation} \par
The vector of truncated discrete AoI processes  
$$\mathbf{A}^{b,\Gamma}[\ell] = \begin{bmatrix}A^{b,\Gamma}_1[\ell] & A^{b,\Gamma}_2[\ell] & \ldots & A^{b,\Gamma}_n[\ell]\end{bmatrix}$$ 
is a finite-state Markov Chain (FSMC), with state space $\mathcal{S}^{b,\Gamma} = \{b+1, b+2, \ldots, b+\Gamma\}^n$. Due to the symmetry of the system, the successful transmission probability $p_s(\ell,\tau)$ is identical for all active nodes at time slot $\ell$ and is given by
\begin{equation}\label{succ_prob}
    p_s(\ell,\tau) = \tau(1-\tau)^{m[\ell]-1},
\end{equation}
where $m[\ell]$ denotes the number of active nodes at time slot $\ell$. The transition probability matrix $P^{b,\Gamma}$ of the FSMC $\mathbf{A}^{b,\Gamma}[\ell]$ can then be built based on the successful transmission probability defined in \eqref{succ_prob}.\par
An FSMC $\mathbf{A}^{\Gamma}[\ell]$ was constructed and the existence of a unique steady-state distribution was shown (Proposition~2 in \cite{atabay2020improving}) for Threshold ALOHA.\footnote{The FSMC $\mathbf{A}^{\Gamma}[\ell]$ is equivalent to $\mathbf{A}^{b, \Gamma}[\ell]$ when $b=0$.}\footnote{Note that the back-off period ($\Gamma$ in our notation) is denoted by $\theta$ in \cite{atabay2020improving}.} The following lemma establishes that the FSMCs $\mathbf{A}^{\Gamma}[\ell]$ and $\mathbf{A}^{b,\Gamma}[\ell]$ share the same unique steady-state distribution, where $b$ only introduces a shift in the state definition of $\mathbf{A}^{b,\Gamma}[\ell]$.
\begin{lem}
    The FSMC $\mathbf{A}^{b,\Gamma}[\ell]$ has a unique steady-state distribution, which matches the steady-state distribution of the FSMC $\mathbf{A}^{\Gamma}[\ell]$ defined in \cite{atabay2020improving}.
\end{lem}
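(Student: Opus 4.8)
The plan is to show that $\mathbf{A}^{b,\Gamma}[\ell]$ is isomorphic, as a Markov chain, to $\mathbf{A}^{\Gamma}[\ell]$ through a simple state relabeling, so that existence and uniqueness of the steady-state distribution transfer directly from Proposition~2 of \cite{atabay2020improving}. Concretely, I would introduce the bijection $\phi:\mathcal{S}^{b,\Gamma}\to\{1,\ldots,\Gamma\}^n$ defined componentwise by $\phi(\mathbf{s})=\mathbf{s}-b\mathbf{1}$, where $\mathbf{1}$ is the all-ones vector. This carries the shifted state space $\{b+1,\ldots,b+\Gamma\}^n$ back onto the state space of $\mathbf{A}^{\Gamma}[\ell]$, and the informal statement that ``$b$ only introduces a shift'' is made precise by asserting that $\phi$ is a chain isomorphism.

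The core of the argument is to verify that the one-step transition law is invariant under $\phi$, i.e. $P^{b,\Gamma}_{\mathbf{s},\mathbf{s}'}=P^{\Gamma}_{\phi(\mathbf{s}),\phi(\mathbf{s}')}$ for all $\mathbf{s},\mathbf{s}'$. I would proceed coordinate by coordinate using the update rule in \eqref{eq:aoi_gamma}. The three cases---reset to $b+1$ upon a successful transmission, holding at the top state $b+\Gamma$, and otherwise incrementing by one---map exactly onto the corresponding three cases for $\mathbf{A}^{\Gamma}[\ell]$ (reset to $1$, hold at $\Gamma$, increment) once each coordinate is decremented by $b$. The only coupling between coordinates enters through the common per-slot success probability $p_s(\ell,\tau)=\tau(1-\tau)^{m[\ell]-1}$ from \eqref{succ_prob}.

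Hence the crux is to confirm that the number of active nodes $m[\ell]$ is itself invariant under the relabeling. Node $i$ is active precisely when it occupies the top state, $A^{b,\Gamma}_i[\ell]=b+\Gamma$, and this condition is equivalent to $\phi(\mathbf{s})_i=\Gamma$, which is exactly the activity condition for the $b=0$ chain. Therefore $m[\ell]$, and with it $p_s(\ell,\tau)$, is the same function of $\phi(\mathbf{s})$ as it is of $\mathbf{s}$, so the per-coordinate transition probabilities coincide. I expect this invariance of the active-node count---rather than any per-coordinate bookkeeping---to be the step that carries the argument, since it is what guarantees that the coupled, $n$-dimensional dynamics commute with the componentwise shift.

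With the isomorphism established, I would finish by transferring the conclusion: because $\phi$ is a bijection intertwining the two transition matrices, the chains are permutation-similar, and $\mathbf{A}^{b,\Gamma}[\ell]$ inherits existence and uniqueness of a stationary distribution (via the same irreducibility and aperiodicity used in \cite{atabay2020improving}) from $\mathbf{A}^{\Gamma}[\ell]$. The stationary distribution then satisfies $\pi^{b,\Gamma}(\mathbf{s})=\pi^{\Gamma}(\phi(\mathbf{s}))$, which is precisely the sense in which the two steady-state distributions match, the shift by $b$ being the only difference.
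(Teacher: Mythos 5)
Your proposal is correct and follows essentially the same route as the paper's proof: the same componentwise shift bijection $\mathbf{s}\mapsto\mathbf{s}-b\mathbf{1}$, the same key observation that the active-node count $m[\ell]$ (and hence $p_s(\ell,\tau)$) is invariant under the relabeling so the transition matrices coincide, and the same transfer of the unique stationary distribution from the $b=0$ chain of \cite{atabay2020improving} via $\pi^{b,\Gamma}(\mathbf{s})=\pi^{\Gamma}(\phi(\mathbf{s}))$. Your coordinate-by-coordinate case check of the update rule is merely a more explicit rendering of the step the paper states in one line.
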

\begin{proof}
    Let $\mathcal{S}^\Gamma = \{1, 2, \ldots, \Gamma\}^n$ and $P^\Gamma$ represent the state space and transition probability matrix of the FSMC $\mathbf{A}^\Gamma[\ell]$, respectively. We first construct a bijective mapping $f:\mathcal{S}^{b,\Gamma} \mapsto \mathcal{S}^\Gamma$ such~that
    $$f(\textbf{s})=\textbf{s}+\begin{bmatrix}-b & -b & \ldots & -b\end{bmatrix}.$$ \par
    The state transition dynamics of both Markov chains are characterized by the successful transmission probability $p_s(\ell,\tau)$, as defined in equation \eqref{succ_prob}. Notably, $p_s(\ell,\tau)$ depends on the number of active nodes $m[\ell]$ rather than the exact individual discrete AoI values of the nodes. Because any state vector $s \in \mathcal{S}^{b,\Gamma}$ and its counterpart $f(s) \in \mathcal{S}^\Gamma$ have the same number of active nodes\footnote{For FSMCs $\mathbf{A}^{\Gamma}[\ell]$ and $\mathbf{A}^{b,\Gamma}[\ell]$, a node is active at time slot $\ell$ if its truncated discrete AoI process is equal to $\Gamma$ and $b + \Gamma$, respectively.}, the transition probability matrices $P^{b,\Gamma}$ and $P^{\Gamma}$ are equivalent. Therefore, the solutions $\pi^{b,\Gamma}$ and $\pi^{\Gamma}$ to the global balance equations, 
    \begin{equation}
        \pi^{b,\Gamma} P^{b,\Gamma} = \pi^{b,\Gamma} \quad \text{and} \quad \pi^{\Gamma} P^{\Gamma} = \pi^{\Gamma},
    \end{equation}
    are the same. In other words, the steady-state probability $\pi_s$ of any state vector $s \in \mathcal{S}^{b,\Gamma}$ is identical to the steady-state probability $\pi_{f(s)}$ of its counterpart $f(s) \in \mathcal{S}^\Gamma$.
\end{proof}
{In the rest, we make the following assumption.
\begin{assume} \label{Assumption_1}
    The successful transmission probability $p_s(\ell, \tau)$, given by equation~\eqref{succ_prob}, remains constant in the steady state, i.e., $p_s(\ell, \tau) = p_s(\tau)$.
\end{assume}

This assumption, which has been shown to hold for TA in the large network limit \cite[Corollary~1]{yavascan2021analysis}, is commonly adopted in the related literature~\cite{kadota2021infocom, chen2020infocom, gopal2018infocom, bianchi2000jsac, kwak2005exponential}.}

\vspace{-0.5em}
\section{Optimal Policy and Numerical Results}
Under assumption \ref{Assumption_1}, the successful transmissions of any node $i$, following the symmetric stationary packet selection and channel access strategy, form an arithmetic renewal process. The i.i.d. inter-renewal times $X_i$ are given by ${X_i = (\Gamma + Y)d}$, where $Y$ is a geometric random variable with parameter $p_s(\tau)$. The penalty (or reward) function associated with the renewal process of node $i$ is given by $H_i(t)$ in equation \eqref{new_penalty_fnc}, where $h_i(\delta) = h(\delta)$. \par
The renewal reward process corresponding to each node $i$ is statistically identical. Consequently, in the following analysis, we focus on minimizing the steady-state average expected penalty associated with any of these $n$ renewal processes, which is equivalent to solving problem \eqref{simp_problem_formulation}. Accordingly, we omit the dependence of all variables on the flow index $i$. \par
Using the key renewal theorem for arithmetic processes \cite[Chapter~3.5]{gallager1995dsp}, the time-average expected penalty, $\lim_{\ell \rightarrow \infty} \mathbb{E}_{(b, \tau, \Gamma)} [H(\ell d)]$, under the symmetric stationary packet selection and channel access strategy is given by
\begin{equation} \label{key_renewal_theorem}
    \lim_{\ell \rightarrow \infty} \mathbb{E}_{(b, \tau, \Gamma)} [H(\ell d)] = \frac{\mathbb{E}[R_n]}{\mathbb{E}[X]},
\end{equation}
where $\mathbb{E}[R_n]$ and $\mathbb{E}[X]$ denote the expected penalty per renewal and the expected renewal time, respectively.

\par

{
To express~\eqref{key_renewal_theorem} in a more explicit form, we compute $\mathbb{E}[R_n]$ and $\mathbb{E}[X]$ separately. Each successful transmission resets the discrete AoI to $b+1$ and initiates a new renewal interval. This interval lasts for $\Gamma + Y$ time slots, during which the discrete AoI increases by~$1$ in each slot. The penalty incurred in the $k$-th slot of the interval is given by $\int_{(b+k)d}^{(b+k+1)d} h(\delta) \, d\delta$. Accordingly, the numerator can be written as
\[
\mathbb{E}[R_n] = \mathbb{E} \left[\int_{(b+1)d}^{(b+\Gamma+Y+1)d} h(\delta) \, d\delta\right].
\]
Moreover, the denominator is simply $\mathbb{E}[X] = (\Gamma + \mathbb{E}[Y])d$.
}

\par

{
Since $Y$ is a geometric random variable with success probability $p_s(\tau)$, its mean is $\frac{1}{p_s(\tau)}$, and its PMF is given by $p_Y(y) = p_s(\tau)(1 - p_s(\tau))^{y-1}$ for $y = 1,2,\ldots$. Therefore, we can express the right-hand side of equation~\eqref{key_renewal_theorem} as
\begin{align} \label{L_b_tau_Gamma}
    \sum_{y=1}^\infty \left(\int_{(b+1)d}^{(b+\Gamma+y+1)d} h(\delta) \, d\delta\right) \frac{p_s(\tau)(1 - p_s(\tau))^{y-1}}{(\Gamma + \frac{1}{p_s(\tau)})d}.
\end{align}
}

\par

{
Let $L(b, \tau, \Gamma)$ denote the time-average expected penalty expression in~\eqref{L_b_tau_Gamma}. Then, problem~\eqref{simp_problem_formulation} can be reformulated~as}
\begin{equation} \label{main_problem}
    \min_{\substack{b \in \mathbb Z_{\geq 0}, \ \tau \in (0,1), \ \Gamma \in \mathbb Z_{\geq 0}}} L(b, \tau, \Gamma)
\end{equation} \par

The parameters $b$, $\tau$, and $\Gamma$ are inherently discrete, making them challenging to optimize analytically. To gain insight into the solution, we define the continuous relaxation of the problem \eqref{main_problem}, allowing $b$ and $\Gamma$ to take real values. {
\begin{equation} \label{cont_relax}
    \min_{\substack{b \in \mathbb R_{\geq 0}, \ \tau \in (0,1), \ \Gamma \in \mathbb R_{\geq 0}}} L(b, \tau, \Gamma)
\end{equation}

\par

In Theorem \ref{Theorem_1} below, we establish two structural relationships between the optimal parameters $b^*$, $\Gamma^*$, and $\tau^*$, which will be used in solving (\ref{main_problem}). }

\begin{thm} \label{Theorem_1}
    For a given $\tau$, the optimal parameters $b_{\tau}^*$ and $\Gamma_{\tau}^*$ satisfy the following equations:
    \begin{equation} \label{eq_1_Theorem_1}
        h((b_{\tau}^* + 1)d) = \mathbb E \left[ h((b_{\tau}^* + \Gamma_{\tau}^* + Y + 1)d) \right],
    \end{equation}
    \begin{equation} \label{eq_2_Theorem_1}
        \frac{\mathbb E \left[\int_{(b_{\tau}^*+1)d}^{(b_{\tau}^*+\Gamma_{\tau}^*+Y+1)d} h(\delta) \, d\delta \right]}{\left( \Gamma_{\tau}^* + \mathbb E[Y] \right)d} = \mathbb E \left[ h((b_{\tau}^* + \Gamma_{\tau}^* + Y + 1)d) \right].
    \end{equation} 
    Here, $Y$ is a geometric random variable with parameter $p_s(\tau)$.
\end{thm}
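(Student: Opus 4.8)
The plan is to treat the continuous relaxation~\eqref{cont_relax} as a smooth minimization of $L(b,\tau,\Gamma)$ over $(b,\Gamma)$ for fixed $\tau$, and to recover~\eqref{eq_1_Theorem_1} and~\eqref{eq_2_Theorem_1} as the first-order stationarity conditions $\partial L/\partial b = 0$ and $\partial L/\partial \Gamma = 0$. To organize the computation, I would write $L = N/D$, where $N(b,\Gamma) = \mathbb{E}\big[\int_{(b+1)d}^{(b+\Gamma+Y+1)d} h(\delta)\,d\delta\big]$ is the expected per-renewal penalty and $D(\Gamma) = (\Gamma+\mathbb{E}[Y])d$ is the expected renewal time appearing in~\eqref{L_b_tau_Gamma}, with the expectation taken over the geometric variable $Y$.

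The derivatives of the inner integral follow from the Leibniz rule. Writing $F(x)=\int_0^x h(\delta)\,d\delta$, the integral equals $F\big((b+\Gamma+Y+1)d\big)-F\big((b+1)d\big)$, so, after interchanging differentiation with the expectation over $Y$,
\[
\frac{\partial N}{\partial b} = d\,\mathbb{E}\big[h((b+\Gamma+Y+1)d)\big] - d\,h((b+1)d), \qquad \frac{\partial N}{\partial \Gamma} = d\,\mathbb{E}\big[h((b+\Gamma+Y+1)d)\big].
\]
Since $D$ is independent of $b$, the condition $\partial L/\partial b = (1/D)\,\partial N/\partial b = 0$ yields~\eqref{eq_1_Theorem_1} at once. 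For $\Gamma$, the quotient rule gives $\partial L/\partial \Gamma = (D\,\partial N/\partial \Gamma - N\,\partial D/\partial \Gamma)/D^2$ with $\partial D/\partial\Gamma = d$; setting this to zero and cancelling the common factor $d$ gives $\mathbb{E}\big[h((b+\Gamma+Y+1)d)\big] = N/D$, which is exactly~\eqref{eq_2_Theorem_1} after substituting the expressions for $N$ and $D$.

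The only step requiring care is the interchange of differentiation with the expectation (an infinite sum over $y$) in $\partial N/\partial b$ and $\partial N/\partial \Gamma$, which I expect to be the main technical obstacle. I would justify it by dominated convergence for parameter-dependent sums: the geometric PMF $p_s(\tau)(1-p_s(\tau))^{y-1}$ decays exponentially in $y$, while the term-by-term derivatives contribute only the boundary evaluations $h((b+\Gamma+y+1)d)$ and $h((b+1)d)$, so under the mild regularity that $h$ is continuous and grows subexponentially these derivatives are dominated by a summable sequence uniformly in a neighborhood of $(b,\Gamma)$. Finally, I would remark that~\eqref{eq_1_Theorem_1} and~\eqref{eq_2_Theorem_1} are necessary conditions for an interior optimum, so any optimal pair $(b_{\tau}^*,\Gamma_{\tau}^*)$ in the interior of the feasible region must satisfy them, completing the argument.
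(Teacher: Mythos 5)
Your proposal matches the paper's proof essentially verbatim: the paper also applies the Leibniz rule to obtain $\partial L/\partial b$ and $\partial L/\partial \Gamma$ (its equations \eqref{Th1_Pf_S1} and \eqref{Th1_Pf_S2}, identical to your quotient-rule expressions) and sets them to zero at any extreme point to recover \eqref{eq_1_Theorem_1} and \eqref{eq_2_Theorem_1}. Your dominated-convergence justification for interchanging differentiation with the expectation over $Y$ is a small extra bit of rigor the paper leaves implicit, but it does not change the route.
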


\begin{proof}
{Consider the time-average expected penalty $L(b, \tau, \Gamma)$ for a given $\tau$.} By applying the Leibniz rule, we express
\begin{equation} \label{Th1_Pf_S1}
    \frac{\partial L}{\partial b} = \frac{ \mathbb E \left[ h((b + \Gamma + Y + 1)d) \right] - h((b+1)d)}{ \Gamma + \mathbb E[Y]},
\end{equation} and
\begin{equation} \label{Th1_Pf_S2}
    \frac{\partial L}{\partial \Gamma} = \frac{\mathbb E \left[ h((b + \Gamma + Y + 1)d) \right]}{ \Gamma + \mathbb E[Y] } - \frac{\mathbb E \left[ \int_{(b+1)d}^{(b+\Gamma+Y+1)d} h(\delta) \, d\delta \right]}{\left( \Gamma + \mathbb E[Y] \right)^2d}.
\end{equation}

For any extreme point of the function $L(b, \Gamma)$, including the global minimum, the right-hand sides of both equations \eqref{Th1_Pf_S1} and \eqref{Th1_Pf_S2} are equal to zero. Consequently, we obtain
\begin{equation}
    h((b+1)d) = \mathbb E \left[ h((b + \Gamma + Y + 1)d) \right]
\end{equation} and
\begin{equation}
    \frac{\mathbb E \left[\int_{(b+1)d}^{(b+\Gamma+Y+1)d} h(\delta) \, d\delta \right]}{\left( \Gamma + \mathbb E[Y] \right)d} = \mathbb E \left[ h((b + \Gamma + Y + 1)d) \right].
\end{equation} This completes the proof.
\end{proof}

\begin{figure}
    \centering
    \includegraphics[width=0.75\linewidth]{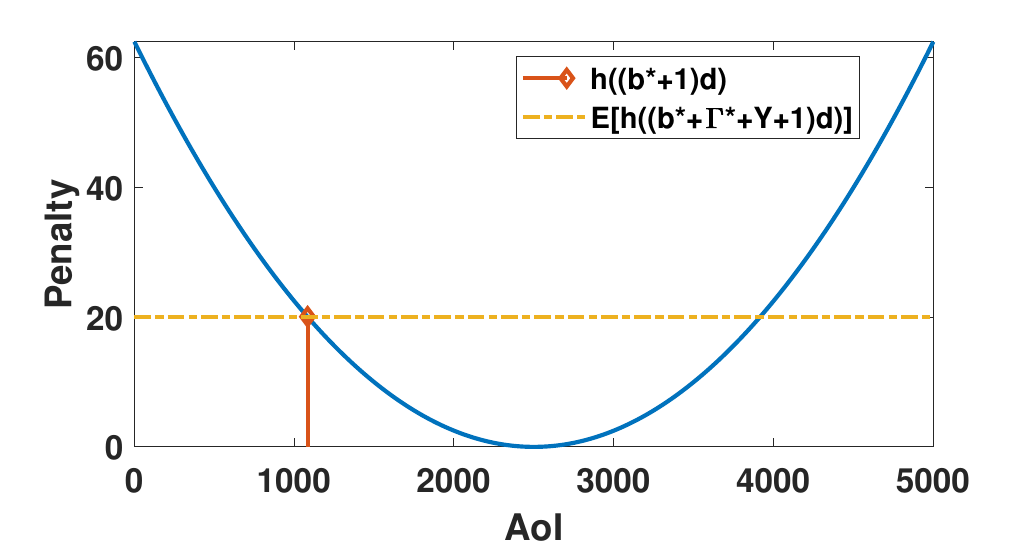}
    \caption{The optimal parameter $b^*$ and expected penalty at the end of a renewal interval, ${\mathbb E \left[ h((b^* + \Gamma^* + Y + 1)d) \right]}$, for $n=1000$ and a simple convex goal function.\vspace{-0.5cm}}
    \label{fig:b_eq_exp}
\end{figure}

Theorem \ref{Theorem_1} implies that both the penalty at the beginning of a renewal interval, $h((b^* + 1)d)$, and the optimal long-term average expected penalty are equal to the expected penalty at the end of a renewal interval, ${\mathbb E \left[ h((b^* + \Gamma^* + Y + 1)d) \right]}$. By setting an appropriate range for the $\tau$ values\footnote{The optimal $\tau$ is intuitively the reciprocal of the number of active nodes in the steady state. Under TA, the number of active nodes in the network converges in probability to $k_0n$ as $n$ increases, where $k_0$ is a constant~\cite{yavascan2021analysis}. This also applies to GORA, as the parameter $b^*$ does not influence the transmission strategies of the nodes.} and using Theorem \ref{Theorem_1}, the optimal parameters can be found. \par
Fig. \ref{fig:b_eq_exp} illustrates the optimal $b^*$ and the expected penalty at the end of a renewal interval, ${\mathbb E \left[ h((b^* + \Gamma^* + Y + 1)d) \right]}$, for $n=1000$ nodes and a simple convex goal function $h(\delta)$. This numerical result confirms equation~\eqref{eq_1_Theorem_1} of Theorem~\ref{Theorem_1}. \par
Note that there may be multiple $b$ and $\Gamma$ pairs satisfying \eqref{eq_1_Theorem_1} and \eqref{eq_2_Theorem_1} in Theorem \ref{Theorem_1} for a given $\tau$ if the time-average expected penalty function $L(b, \tau, \Gamma)$ is not convex. In such cases, the optimal pair, $b_{\tau}^*$ and $\Gamma_{\tau}^*$, should be determined by evaluating $L(b, \tau, \Gamma)$. In our numerical results, we verified convexity within the search region by computing the Hessian of the function $L(b, \tau, \Gamma)$. See Appendix \ref{Hessian} for details.\par

\begin{cor} \label{optimal_buffer_position} The optimal $b^*$ is not necessarily equal to zero for a non-monotone goal function $h(\delta)$.
\end{cor}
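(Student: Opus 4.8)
The plan is to prove the statement by exhibiting a single non-monotone goal function for which the optimal buffer age is strictly positive; since the claim is an existence (``not necessarily'') statement, one well-chosen construction suffices. The lever I would use is the partial derivative $\partial L/\partial b$ already computed in \eqref{Th1_Pf_S1} during the proof of Theorem~\ref{Theorem_1}, evaluated at the lower boundary $b=0$ of the feasible region of the continuous relaxation \eqref{cont_relax}.

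First I would record the boundary optimality condition. Since $b$ ranges over $\mathbb{R}_{\geq 0}$, if a minimizer satisfied $b^\ast=0$ then the one-sided derivative must obey $\left.\partial L/\partial b\right|_{b=0}\ge 0$; by \eqref{Th1_Pf_S1} this reads $\mathbb{E}\!\left[h((\Gamma^\ast+Y+1)d)\right]\ge h(d)$ at the jointly optimal $(\Gamma^\ast,\tau^\ast)$. Hence it is enough to construct an $h$ for which the reverse strict inequality $\mathbb{E}\!\left[h((\Gamma+Y+1)d)\right]<h(d)$ holds for \emph{every} $\Gamma\ge 0$ and every admissible $\tau$; this forces $\left.\partial L/\partial b\right|_{b=0}<0$ for all $(\Gamma,\tau)$, in particular at $(\Gamma^\ast,\tau^\ast)$, and thereby rules out $b^\ast=0$.

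Second, I would build such an $h$. Because \eqref{Th1_Pf_S1} only involves the point values $h(d)$ and $h$ on the lattice $\{\delta\ge 2d\}$ (note $\Gamma+Y+1\ge 2$ since $Y\ge 1$), I can specify $h$ on the grid freely: take $h$ bounded, make $h(d)$ its dominant value, let $h$ decrease to a valley at some $\delta_0>d$ and then rise again to a plateau whose height stays strictly below $h(d)$. This $h$ is genuinely non-monotone, since it both decreases and increases, yet $\sup_{\delta\ge 2d}h(\delta)<h(d)$, so $\mathbb{E}\!\left[h((\Gamma+Y+1)d)\right]\le \sup_{\delta\ge 2d}h(\delta)<h(d)$ uniformly in $\Gamma$ and in the geometric law of $Y$. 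Consequently $L$ is strictly decreasing in $b$ at the left endpoint, the minimizer satisfies $b^\ast>0$, and the same comparison $L(1,\tau,\Gamma)<L(0,\tau,\Gamma)$ transfers the conclusion to the discrete problem \eqref{main_problem}.

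The main obstacle is the unbounded support of the geometric variable $Y$: a naive non-monotone $h$ that grows without bound in age could make $\mathbb{E}\!\left[h((\Gamma+Y+1)d)\right]$ diverge or exceed $h(d)$, breaking the uniform bound. I sidestep this by keeping $h$ bounded with a plateau below $h(d)$, which renders the expectation controllable regardless of $p_s(\tau)$; the only remaining check, that the decrease-then-increase shape violates both monotonicity directions, is immediate. For contrast, I would remark that a non-decreasing $h$ yields $h((b+1)d)\le h((b+\Gamma+Y+1)d)$ pointwise, hence $\partial L/\partial b\ge 0$ everywhere and $b^\ast=0$, which makes precise the sense in which non-monotonicity is exactly the feature that can push $b^\ast$ away from zero, consistent with \eqref{eq_1_Theorem_1}.
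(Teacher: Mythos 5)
Your proposal is correct in substance and takes a genuinely different route from the paper. The paper states Corollary~\ref{optimal_buffer_position} without an analytic proof: its support is Theorem~\ref{Theorem_1} plus numerics, i.e., the fixed-point conditions \eqref{eq_1_Theorem_1}--\eqref{eq_2_Theorem_1} are solved numerically for a convex goal function and the resulting $b^*>0$ is exhibited in Fig.~\ref{fig:h1_b}, with the dependence on network size $n$ discussed qualitatively. You instead give a self-contained constructive proof: a bounded, decrease-then-increase $h$ with $\sup_{\delta \geq 2d} h(\delta) < h(d)$ makes $\left.\partial L/\partial b\right|_{b=0}$ in \eqref{Th1_Pf_S1} strictly negative \emph{uniformly} over $\Gamma \geq 0$ and $\tau \in (0,1)$ (using $Y \geq 1$ so that all arguments of $h$ in the expectation are at least $2d$, and boundedness of $h$ to control the geometric tail), so $b^*=0$ violates the one-sided first-order condition at the boundary of \eqref{cont_relax}. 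This buys rigor and uniformity the paper's numerical demonstration does not: you never need $p_s(\tau)$, the optimal $\tau^*$, or a specific $n$. Your closing remark --- that non-decreasing $h$ gives $h((b+1)d) \leq h((b+\Gamma+Y+1)d)$ pointwise, hence $\partial L/\partial b \geq 0$ and $b^*=0$ --- is a clean converse that isolates non-monotonicity as the operative feature, which the paper only conveys through simulation (GORA coinciding with TA when $b^*=0$).

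One step is under-specified: the transfer to the discrete problem \eqref{main_problem} via $L(1,\tau,\Gamma) < L(0,\tau,\Gamma)$. That comparison requires $\partial L/\partial b < 0$ on all of $(0,1)$, i.e., $h((b+1)d) > \mathbb{E}\left[h((b+\Gamma+Y+1)d)\right]$ for every $b \in (0,1)$, whereas your stated conditions only control $h(d)$ and $h$ on $[2d,\infty)$. If the valley $\delta_0$ lies in $(d,2d)$ and dips below the plateau, then $h((b+1)d)$ drops below the expectation for $b$ near $\delta_0/d - 1$, the derivative changes sign inside $(0,1)$, and $L(1)$ can exceed $L(0)$ (the average of $h$ over $[d,2d]$ absorbs the valley), so the discrete conclusion does not follow from what you wrote. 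The fix is one line: place the valley at or beyond $2d$ --- e.g., take $h$ strictly decreasing on $[d,3d]$, then rising to a plateau of height strictly below $h(2d)$. Then for every $b \in [0,1)$ each realization satisfies $h((b+\Gamma+y+1)d) \leq \max\bigl(h((b+2)d),\, \sup_{\delta \geq 3d} h(\delta)\bigr) < h((b+1)d)$, so the expectation is strictly dominated, $L$ is strictly decreasing in $b$ on $[0,1]$, and the integer comparison goes through. With that repair (and noting that, for the continuous claim, a strictly negative derivative at $b=0$ already produces some $b'>0$ with $L(b') < L(0)$, so no separate existence argument is needed to rule out optimality of $b=0$), your proof is complete and strictly stronger than the paper's numerical justification.
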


Corollary \eqref{optimal_buffer_position} states that transmitting the freshest packet in the buffer is not always optimal for nodes. Depending on the goal function~$h(\delta)$ and the network size~$n$, transmitting a staler packet may result in better task performance at the destination. Fig. \ref{fig:h1_b} illustrates the optimal parameter $b^*$ for varying network sizes $n$, considering a simple convex goal function. When the network is not overly crowded, nodes prefer to transmit a staler packet to maintain the AoI of the flow around values where the task-related penalty at the destination is small. However, as the network size increases, nodes tend to transmit fresher packets due to the reduced frequency of successful transmissions. 
\par
{Fig.~\ref{fig:h1_gora_vs_ta} illustrates the time-average penalty achieved by GORA, TA, and Slotted ALOHA (SA) for varying network sizes, considering the goal function depicted in Fig.~\ref{fig:h1_b}. Because minimizing the AoI does not always yield optimal performance due to the non-monotonicity of the goal function, by leveraging the selection-from-buffer data generation model \cite{shisher2024timely}, GORA outperforms both the conventional random access policy, SA, and the time-average AoI-minimizing policy, TA. However, when the network becomes overly crowded, i.e., when the expected time intervals between successive successful transmissions become excessively large, GORA also selects fresh packets, causing its performance advantage to disappear. Notably, GORA and TA yield identical results when $b^*=0$.
}

\par

\begin{figure}
    \centering
        \includegraphics[width=0.75\linewidth]{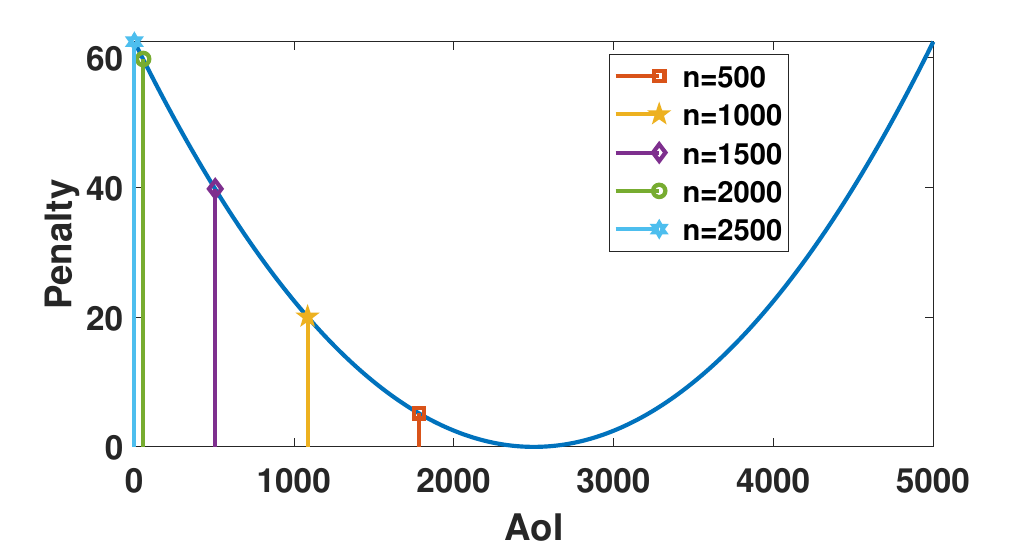}
        \caption{The optimal parameter $b^*$ for a simple convex goal function as the network size $n$ varies from 500 to 2500.\vspace{-0.5cm}}
        \label{fig:h1_b}
\end{figure}

\begin{figure}
    \centering
    \includegraphics[width=0.75\linewidth]{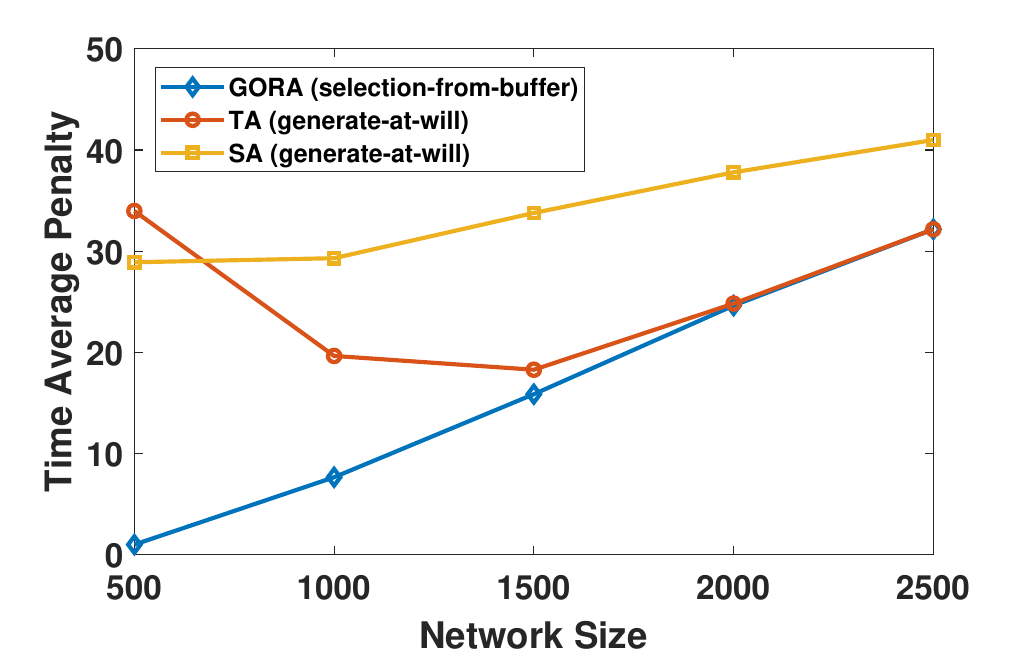}
    \caption{The performance evaluation of GORA, TA \cite{yavascan2021analysis, atabay2020improving} and SA for the goal function depicted in Fig. \ref{fig:h1_b}.\vspace{-0.5cm}}
    \label{fig:h1_gora_vs_ta}
\end{figure}

\begin{cor}\label{cor:global_min}
    Let $\mathcal{X}$ be the set of AoI values $\delta$ for which the goal function $h(\delta)$ is minimized. In the optimal setting, none of the values $x \in \mathcal{X}$ may satisfy
    \begin{equation}
        (b^*+1)d < x < (b^* + \Gamma^* + \mathbb{E}[Y] + 1)d
    \end{equation}
    for some non-monotone goal functions $h(\delta)$ and sufficiently large network size $n$.
\end{cor}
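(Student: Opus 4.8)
The plan is to prove the statement as an existence claim: exhibit a non-monotone (and necessarily non-convex) goal function $h$ for which, once $n$ is large enough, the optimal operating window $\big((b^*+1)d,\,(b^*+\Gamma^*+\mathbb{E}[Y]+1)d\big)$ contains no global minimizer of $h$ in its interior. First I would exploit the large-network asymptotics. As argued after Theorem~\ref{Theorem_1}, the optimal $\tau^*$ behaves like the reciprocal of the number of active nodes, so $p_s(\tau^*)\to 0$ and hence $\mathbb{E}[Y]=1/p_s(\tau^*)\to\infty$ as $n\to\infty$. Since the right endpoint $(b^*+\Gamma^*+\mathbb{E}[Y]+1)d$ then diverges while every element of $\mathcal{X}$ is a fixed finite AoI value, no minimizer can lie at or above the right endpoint for large $n$. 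The claim therefore reduces to showing that the left endpoint can be pushed past the minimizers, i.e. $x\le (b^*+1)d$ for every $x\in\mathcal{X}$.

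Next I would construct a piecewise goal function with three features: a narrow, deep global minimum located at a small AoI $\delta_{\min}$; a high-penalty barrier immediately to its right; and a broad, shallow basin of moderate penalty at larger AoI. The key design choice is to make the basin much wider than the typical window length $\big(\Gamma^*+\mathbb{E}[Y]\big)d$ and to make the barrier penalty large, so that a window sitting entirely inside the basin yields a far smaller value of $L(b,\tau,\Gamma)$ than any window whose left end reaches $\delta_{\min}$: reaching the deep-but-narrow minimum would force the renewal interval to sweep across the costly barrier, whose contribution to the numerator $\mathbb{E}[R_n]$ in \eqref{key_renewal_theorem} dwarfs the negligible saving from the measure-$\epsilon$ minimum. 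I would formalize this by evaluating the renewal-reward ratio in \eqref{L_b_tau_Gamma} for the two candidate window placements and showing the basin placement is strictly better for all sufficiently large $n$.

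Having established that the optimizer lies in the basin, I would invoke Theorem~\ref{Theorem_1}: the first-order condition \eqref{eq_1_Theorem_1}, $h((b^*+1)d)=\mathbb{E}\!\left[h((b^*+\Gamma^*+Y+1)d)\right]$, equates the freshest-sample penalty to the expected stalest-sample penalty, and for the basin placement this common value equals the moderate basin penalty, which strictly exceeds $h(\delta_{\min})$. Consequently $(b^*+1)d$ lies in the basin, hence $(b^*+1)d>\delta_{\min}\ge x$ for every $x\in\mathcal{X}$, so no minimizer is interior to the window. Together with the diverging right endpoint this proves the corollary.

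The main obstacle I anticipate is controlling the heavy right tail of the geometric variable $Y$: because $Y$ has standard deviation of the same order as its mean, the stale endpoint $(b^*+\Gamma^*+Y+1)d$ occasionally overshoots the basin and lands in the high-penalty region, inflating $\mathbb{E}[R_n]$. Making the comparison rigorous requires choosing the basin width to grow fast enough in $n$ (relative to $\mathbb{E}[Y]$) that $\Pr(Y\gg \mathbb{E}[Y])$ times the barrier penalty remains negligible, while the basin is still cheaper than straddling the minimum. A secondary subtlety is that the argument only needs one such $h$; I would therefore fix explicit piecewise-constant values and slot alignments rather than seek the sharpest constants.
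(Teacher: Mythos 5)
Your attempt is correct in spirit, but it should be said up front that you are proving something the paper never actually proves: Corollary~\ref{cor:global_min} is stated without a formal proof. The paper supports it only by the discussion paragraph that follows it (long renewal intervals for large $n$ make it costly to ``greedily'' park the AoI window at a sharply walled global minimum) and by the numerical example of Fig.~\ref{fig:h2_b}, where for the two-minima goal function the optimizer operates near the local minimum for $n$ between $500$ and $2000$, with the phenomenon disappearing at $n=2500$ where $b^*=0$. Your proposal replaces this numerical evidence with a constructive existence argument: divergence of $\mathbb{E}[Y]=1/p_s(\tau^*)$ with $n$ (consistent with the paper's footnote that $\tau^*$ scales as the reciprocal of the number of active nodes), an explicit piecewise goal function (deep narrow minimum, barrier, broad shallow basin), and a direct renewal-reward comparison via \eqref{key_renewal_theorem}--\eqref{L_b_tau_Gamma}. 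That is a genuinely different and more rigorous route, and it buys an actual proof where the paper offers an illustrated claim; the paper's route, in exchange, exhibits the phenomenon for a natural fixed $h$ and shows its non-monotone dependence on $n$.

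Three caveats on your sketch. First, the quantifier tension you flag is real: letting the basin width grow with $n$ makes $h$ depend on $n$, which clashes with a ``fixed $h$, all sufficiently large $n$'' reading of the corollary. The clean fix is to make $h$ bounded with the moderate-penalty basin extending to infinity (a flat tail); this simultaneously kills your geometric-tail worry, since then $\mathbb{E}\left[h((b+\Gamma+Y+1)d)\right]$ is pinned at the basin level once the window start clears the barrier, and no overshoot into a high-penalty region is possible. (Note the paper's own example cannot satisfy the ``all large $n$'' reading either, since exclusion fails there at $n=2500$; so an existential reading is forced, under which your construction works even without the flat tail.) Second, comparing only two candidate placements is insufficient: you must dominate every $(b,\Gamma)$ whose window interior contains a minimizer, though with piecewise-constant bounded $h$ this is easy, since any such window necessarily sweeps the barrier (its length is at least $\mathbb{E}[Y]d\to\infty$) and thus pays a strictly positive surcharge over a pure-basin window of the same length. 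Third, your invocation of \eqref{eq_1_Theorem_1} is both unnecessary and shaky: the Leibniz-rule derivation of Theorem~\ref{Theorem_1} presumes enough smoothness of $h$, which a piecewise-constant construction violates, and as written the step is circular, since you place the optimum in the basin by direct comparison before appealing to the first-order condition. Drop it and let the direct comparison deliver $(b^*+1)d > x$ for all $x \in \mathcal{X}$ on its own.
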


Corollary \ref{cor:global_min} highlights an interesting phenomenon arising from the relationship between network size and the optimal parameter $b^*$. Inherent to the random access concept, the rate of successful transmission opportunities for individual nodes decrease as the number of nodes in the system increases. Since the objective is to minimize the average penalty, the policy parameter $b$ cannot be greedily chosen to operate around an AoI value $\delta'$ where the goal function $h(\delta')$ is minimized. Depending on the expected duration between consecutive successful transmissions, operating at AoI values away from $\delta'$ may result in better task performance at the destination. \par
Fig. \ref{fig:h2_b} indicates the optimal parameter $b^*$ for varying network sizes $n$, considering a goal function with two minima. For network sizes $n$ between $500$ and $2000$, operating near the local minima results in better task performance at the destination. This behavior occurs as the penalty increases sharply for AoI values near the global minima, making it challenging to maintain a low steady-state average expected penalty within the intervals between consecutive successful transmissions. However, when the network size reaches $n = 2500$, the optimal parameter $b^*=0$, meaning GORA and TA become identical. 
{Fig. \ref{fig:h2_gora_vs_ta} presents the time-average penalty achieved by GORA, TA, and SA for the goal function depicted in Fig. \ref{fig:h2_b}.
Consistent with the results in Fig.~\ref{fig:h1_gora_vs_ta}, GORA outperforms the benchmark policies.
}
\par

\begin{figure}
    \centering
    \includegraphics[width=0.75\linewidth]{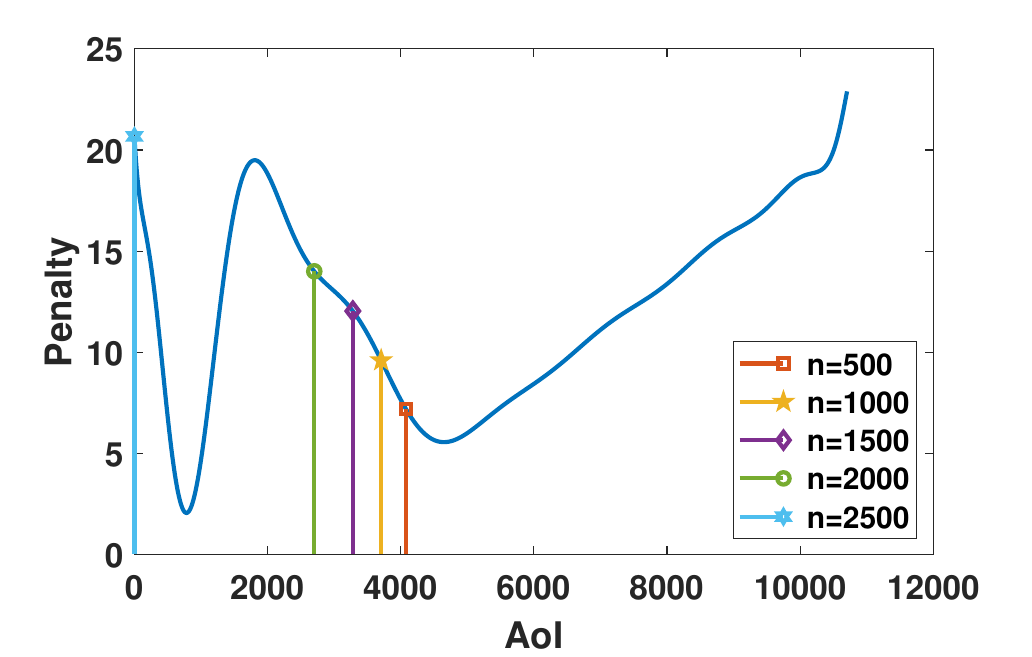}
    \caption{The optimal parameter $b^*$ for a goal function with two minima as the network size $n$ varies from 500 to 2500.\vspace{-0.5cm}}
    \label{fig:h2_b}
\end{figure}

\begin{figure}
    \centering
    \includegraphics[width=0.75\linewidth]{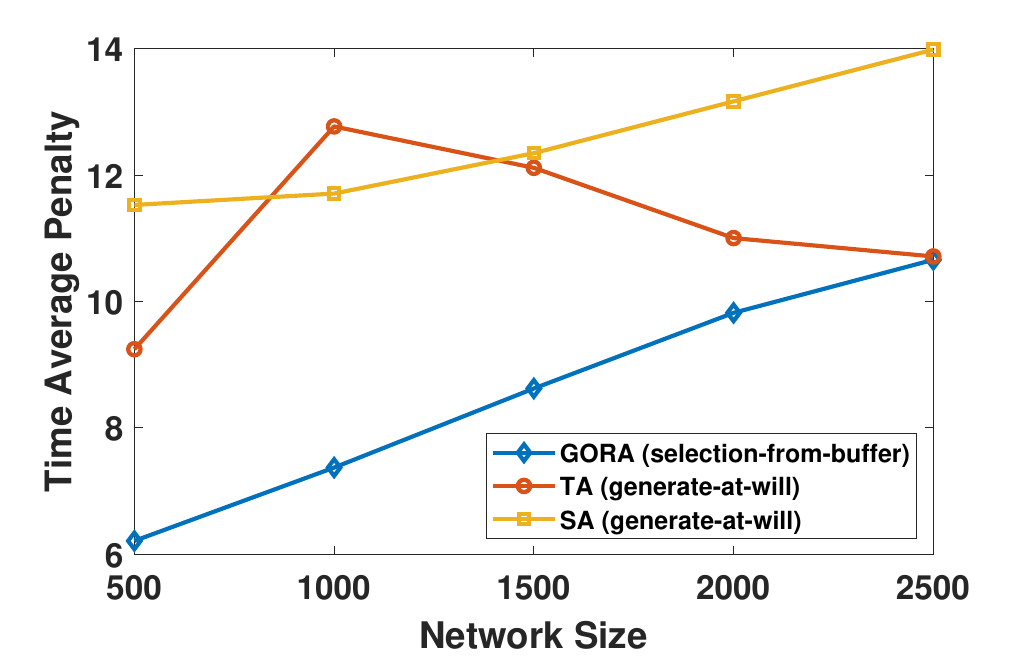}
    \caption{The performance evaluation of GORA, TA \cite{yavascan2021analysis, atabay2020improving} and SA for the goal function depicted in Fig. \ref{fig:h2_b}.\vspace{-0.5cm}}
    \label{fig:h2_gora_vs_ta}
\end{figure}

\vspace{-0.5em}
\section{Conclusion}
We introduced Goal-Oriented Random Access (GORA). This is a modification of conventional slotted ALOHA where transmission decisions are coupled with the selection of what to transmit, considering the ultimate goal of the data transmission. By leveraging the Age of Information (AoI) as an auxiliary metric, GORA allows nodes to strategically delay transmissions and send aged data samples, thereby enabling the effective accomplishment of the task at the destination. Our results demonstrate the superiority of GORA over age-aware random access policies for goal functions that are monotone or non-monotone in AoI. 
{Future work may explore systems that accommodate users with different, possibly non-monotone, goal functions and extend GORA by incorporating more advanced channel access strategies such as reservation-based or contention-based methods.}

\vspace{-0.5em}
\bibliographystyle{IEEEtran}
\bibliography{gora}

\vspace{-0.5em}
\appendices
\section{} \label{Hessian}
For ease of notation, let $\vartheta$, $\vartheta'$, and $\varphi$ represent the terms ${h((b + \Gamma + Y + 1)d)}$, ${h'((b + \Gamma + Y + 1)d)}$, and ${\Gamma + \mathbb{E}[Y]}$, respectively. The first entry and determinant of the Hessian matrix $\mathbb{H}$ of the function $L(b, \tau, \Gamma)$ for a given $\tau$ can then be expressed as follows:
\begin{equation} \label{first_entry}
    \frac{\partial^2 L}{\partial b^2} = \frac{ \left( \mathbb E \left[ \vartheta' \right] - h'((b+1)d) \right) d}{\varphi}
\end{equation}
\begin{align} \label{hessian}
    &\det(\mathbb{H}) =  \left( \frac{ \left( \mathbb E \left[ \vartheta' \right] - h'((b+1)d) \right) d}{\varphi} \right)\nonumber\\
    &\left( \frac{\varphi\mathbb E \left[ \vartheta' \right]d-2\mathbb E \left[ \vartheta \right]}{\varphi^2} + \frac{2\mathbb E \left[ \int_{(b+1)d}^{(b+\Gamma+Y+1)d} h(\delta) d\delta\right]}{\varphi^3d}\right) - \nonumber\\
    &\left( \frac{\varphi\mathbb E \left[ \vartheta' \right]d- \mathbb E \left[ \vartheta \right]+h((b+1)d)}{\varphi^2} \right)^2
\end{align}
If the Hessian matrix $\mathbb{H}$ is positive definite, then the function $L(b, \tau, \Gamma)$ for a given $\tau$ is convex. A necessary and sufficient condition for the $2 \times 2$ Hessian matrix $\mathbb{H}$ to be positive definite is that both \eqref{first_entry} and \eqref{hessian} are strictly positive.

\vfill

\end{document}